\newtheorem{theorem}{Theorem}
\newtheorem{proposition}[theorem]{Proposition}
\newtheorem{lemma}[theorem]{Lemma}
\newcommand{\be}{\begin{equation}}
\newcommand{\ee}{\end{equation}}
\newcommand{\bea}{\begin{eqnarray}}
\newcommand{\eea}{\end{eqnarray}}
\newcommand{\ba}{\begin{array}}
\newcommand{\ea}{\end{array}}
\newcommand{\bean}{\begin{eqnarray*}}
\newcommand{\eean}{\end{eqnarray*}}
\newcommand{\pa}{\partial}
\begin{document}

\title{The ``ghost" symmetry in the CKP hierarchy}
\author{Jipeng Cheng\dag, Jingsong He$^*$ \ddag }
\dedicatory {  \dag \ Department of Mathematics, China University of
Mining and Technology, Xuzhou, Jiangsu 221116,
P.\ R.\ China\\
\ddag \ Department of Mathematics, Ningbo University, Ningbo,
Zhejiang 315211, P.\ R.\ China }

\thanks{$^*$Corresponding author. Email: hejingsong@nbu.edu.cn.}
\begin{abstract}
In this paper, we systematically study the ``ghost" symmetry in the CKP hierarchy
through its actions on the Lax operator, dressing operator, eigenfunctions and the tau function.
In this process, the spectral representation of the eigenfunction is developed and the
squared eigenfunction potential is investigated.  \\
\textbf{Keywords}:  CKP
hierarchy, squared eigenfunction symmetry, spectral representation, squared eigenfunction potential.\\
\textbf{PACS}: 02.30.Ik\\
\textbf{2010 MSC}: 35Q53, 37K10, 37K40
\end{abstract}
\maketitle
%%%%%%%%%%%%%%%%%%%%%%%%%%%%%%%%%%%%%%%%%%%%%%%%%%%%%%
\section{Introduction}
%%%%%%%%%%%%%%%%%%%%%%%%%%%%%%%%%%%%%%%%%%%%%%%%%%%%%%%
In this paper, given any pseudo-differential operator $A=\sum_ia_i\pa^i$ with $\pa=\pa_x$ and any function $f$,
\begin{eqnarray}
A_+=\sum_{i\geq 0}a_i\pa^i,\ A_-=\sum_{i< 0}a_i\pa^i,\ {\rm Res}(\sum_ia_i\pa^i)=a_{-1},\ (\sum_ia_i\pa^i)_{[k]}=a_{k}, \ A^*=\sum_i(-\pa)^ia_i,
\end{eqnarray}
and $A(f)$ denotes the action of $A$ on $f$.

The Kadomtsev-Petviashvili (KP) hierarchy \cite{DJKM} is an important research object in the area of mathematical physics,
which is defined by the following Lax equation
\begin{equation}\label{kphierarchy}
    \frac{\pa L}{\pa t_n}=[(L^{n})_+,L],\ \ n=1,2,3,\cdots,
\end{equation}
with the Lax operator $L$ given by
\begin{equation}\label{laxoperator}
    L=\pa+u_1\pa^{-1}+u_2\pa^{-2}+\cdots,
\end{equation}
where the coefficient functions $u_i$
are all the functions of the time variables
$t=(t_1=x,t_2,t_3,\cdots)$.
The Lax operator (\ref{laxoperator}) can be generated by the dressing operator $\Phi=1+\sum_{k=1}^\infty a_k\pa^{-k}$ in the following way:
\begin{equation}\label{dressingoperator}
    L=\Phi\pa \Phi^{-1}
\end{equation}
Then the Lax equation (\ref{ckphierarchy}) can also be expressed as
Sato's equation
\begin{equation}\label{satoquation}
    \frac{\pa \Phi}{\pa t_n}=-(L^n)_-\Phi,\quad n=1,2,3,\cdots.
\end{equation}
Another important object is the Baker-Akhiezer (BA) wave
function $\psi_{BA}(t,\lambda)$ defined via:
\begin{equation}\label{wavefunction}
\psi_{BA}(t,\lambda)=\Phi(e^{\xi(t,\lambda)})=\phi(t,\lambda)e^{\xi(t,\lambda)}
\end{equation}
with $\xi(t,\lambda)\equiv\sum_{k=1}^\infty t_{k} \lambda^{k}$ and $\phi(t,\lambda)=1+\sum_{i=1}^\infty
a_i(t)\lambda^{-i}$, which satisfies
\begin{eqnarray}
L(\psi_{BA}(t,\lambda))=\lambda\psi_{BA}(t,\lambda),\quad\pa_{t_n} \psi_{BA}(t,\lambda)=(L^n)_+(\psi_{BA}(t,\lambda)),\quad n=1,2,3,\cdots.\label{waveinv}
\end{eqnarray}
The adjoint BA function $\psi_{BA}^*(t,\lambda)$ is introduced through the following way
\begin{equation}\label{adwave}
\psi_{BA}^*(t,\lambda)=\Phi^{*-1}(e^{-\xi(t,\lambda)}).
\end{equation}
The KP hierarchy can also be expressed in terms of a single function called
the tau function $\tau(t)$\cite{DJKM}, which is related with the wave function
in the way below,
\begin{eqnarray}
\psi_{BA}(t,\lambda)=\frac{\tau(t_1-\frac{1}{\lambda},t_2-\frac{1}{2\lambda^2},t_3-\frac{1}{3\lambda^3},\cdots)}{\tau(t_1,t_2,t_3,\cdots)} e^{\xi(t,\lambda)}.\label{kptau}
\end{eqnarray}
Because of the existence of the tau function, many important results in the KP hierarchy can be considered in terms of the tau function, such as
the flow equation, Hirota's bilinear equation and algebraic constraint.
KP hierarchy has two famous
sub-hierarchies\cite{DJKM,djkm1981}: the BKP hierarchy and the CKP hierarchy. Just like the KP hierarchy, the BKP hierarchy also owns
one single tau function, which bring much convenience to the study of the BKP hierarchy.

The CKP hierarchy \cite{djkm1981} is a reduction of the KP hierarchy through the constraint on
$L$ given by (\ref{laxoperator}) as
\begin{equation}\label{CKPconst}
L^*=-L, \end{equation}
then $L$ is called the Lax operator of the CKP hierarchy, and the associated Lax equation of the
CKP hierarchy is
\begin{equation}\label{ckphierarchy}
    \frac{\pa L}{\pa t_n}=[(L^{n})_+,L],\ \ n=1,3,5,\cdots,
\end{equation}
which compresses all even flows, i.e., the Lax equation of the CKP hierarchy has only odd flows. The CKP
constraint (\ref{CKPconst}) on the correspond dressing operator $\Phi$ will be $\Phi^*=\Phi^{-1}$. And thus
in the CKP hierarchy $\psi_{BA}^*(t,\lambda)=\Phi^{*-1}(e^{-\xi(t,\lambda)})=\Phi(e^{-\xi(t,\lambda)})=\psi_{BA}(t,-\lambda)$.
So in the CKP hierarchy, it is enough to only study the wave function $\psi_{BA}(t,\lambda)$.
The CKP hierarchy (\ref{ckphierarchy}) is equivalent to the following bilinear equation:
\begin{equation}\label{cbilin}
  \int d\lambda \psi_{BA}(t,\lambda)\psi_{BA}(t',-\lambda)=0,
\end{equation}
where $\int
d\lambda\equiv\oint_\infty\frac{d\lambda}{2\pi i}=
{\rm Res}_{\lambda=\infty}$. By now, the CKP hierarchy has
attracted many researches\cite{van1,van2,van3,wucz,loris1999,loris2001,hejhep,1he2007,2he2007,hejnmp,tian2011}.

In contrast to the KP and the BKP cases, there seems not a single tau function to describe the CKP hierarchy
 in the form of Hirota bilinear equations (that is, Hirota's equations are no longer of the type $P(D)\tau\cdot\tau=0$)\cite{djkm1981}.
The existence of this kind of tau function for the CKP hierarchy is
a long-standing problem. Much work has been done in this field\cite{van1,van2,van3,wucz}.
Since the existence of tau function of the CKP hierarchy is not proved,
many important results on the Lax operator and dressing operator of the
 CKP hierarchy can not be transferred to the tau
function, until Chang and Wu \cite{wucz} introduces a kind of tau function $\tau_c(t)$
for the CKP case, which is related with the wave function in the following way
\begin{eqnarray}
\psi_{BA}(t,\lambda)&=&\sqrt{\varphi(t,\lambda)}\frac{\tau_c(t-2[\lambda^{-1}])}{\tau_c(t)}e^{\xi(t,\lambda)}\nonumber\\
&=&\left(1+\frac{1}{\lambda}\pa_x{\rm log}\frac{\tau_c(t-2[\lambda^{-1}])}{\tau_c(t)}\right)^{1/2}\frac{\tau_c(t-2[\lambda^{-1}])}{\tau_c(t)}e^{\xi(t,\lambda)},\label{ctauexpression}
\end{eqnarray}
where $[\lambda^{-1}]=(\lambda^{-1},\frac{1}{3}\lambda^{-3},\cdots)$ and $\varphi(t,\lambda)=\phi(t,\lambda)\phi(t-2[\lambda^{-1}],-\lambda)$.
Note that the relation between the tau function and the wave function
is different from the cases of KP and BKP, because there is a square-root factor depending on the tau function.
And the relation of the new CKP tau function $\tau_c$ to the (C-reduced) KP tau function $\tau$ is showed in the Appendix \ref{apptau}.

The ``ghost" symmetry\cite{oevel1993,oevel1994,1OC98,2OC98,aratyn1998}, sometimes called the squared eigenfunction symmetry\cite{oevel1993},
is one of the most important symmetries in the integrable system, which is defined through the squared eigenfunctions\cite{oevel1993}. The usage of
the squared eigenfunction to construct the symmetry flows can be traced back to \cite{orlov1988,orlov1991,orlov1993}, where operators $\psi\partial^{-1} \psi^*$ were introduced to
construct $L-A$ pairs for symmetry flows ($\psi$ and $\psi^*$ being wave functions respectively of $L$ and $L^*$ operators).
And similar symmetry flows are also studied from the Hamiltonian point of view \cite{orlov1996}.
The ``ghost" symmetry can be used to define the new integrable system, such as the
symmetry constraint\cite{oevel1994,2OC98,Cheng91,SS91,KS92,Cheng92,LorisWillox99,Tu2011} and the extended integrable systems\cite{1Zeng08, 2Zeng08},
and investigate the additional symmetry\cite{OS86,ASM95,D95,takasaki1996,Tu07,1he2007,tian2011}. The ``ghost" symmetry
has attracted many researches recently. For example, recently the ``ghost" symmetries for the BKP hierarchy\cite{Cheng10},
 discrete KP hierarchy\cite{Li12}, the Toda lattice hierarchy \cite{cheng2012} and its B type and C type cases \cite{cheng2013} are all studied.

In this paper, firstly starting from the bilinear identity for the CKP hierarchy, the spectral representation for the eigenfunction is established.
Upon the basis of the spectral representation, the expression of the squared eigenfunction potential (SEP) for the eigenfunction and the wave function
is derived, and further all other SEPs are also obtained. Then the ``ghost" symmetry of the CKP hierarchy is constructed
by its action on the Lax operator and the dressing operator. At last, since the existence of the tau function $\tau_c(t)$,
the action of the ``ghost" symmetry is transfered to the tau function $\tau_c(t)$, which has never been studied before.

This paper is organized in the following way. In Section 2, the SEP for the CKP hierarchy is
investigated. In Section 3, we study the ``ghost" symmetry in the CKP hierarchy. At last, some conclusions
and discussions are given in section 4.

%%%%%%%%%%%%%%%%%%%%%%%%%%%%%%%%%%%%%%%%%%%%%%%%%%%%%%
\section{SEP for the CKP Hierarchy}
%%%%%%%%%%%%%%%%%%%%%%%%%%%%%%%%%%%%%%%%%%%%%%%%%%%%%%%
If the functions $q(t)$ and $r(t)$ satisfy
\begin{equation}\label{eigenfunction}
  \frac{\pa q}{\pa{t_n}}=L^n_+(q),\quad\frac{\pa r}{\pa{t_n}}=-(L^n)_+^*(r),\quad n=1,3,5,\cdots,
\end{equation}
then they are called the eigenfunction and the adjoint eigenfunction of the CKP hierarchy respectively.
Obviously according to (\ref{waveinv}), $\psi_{BA}(t,\lambda)$ is the eigenfunction. By the CKP constraint (\ref{CKPconst}),
$L^n_+=-(L^n)_+^*$ for $n\in \mathbb{Z}_+^{\rm odd}$. Thus any adjoint eigenfunction $r(t)$ can be viewed as the eigenfunction and vice versus.

For an arbitrary pair of
eigenfunctions $q_1(t)$ and $q_2(t)$ of the CKP hierarchy, there exists the function $S(q_1(t),q_2(t))$,
called the squared eigenfunction potential (SEP)\cite{oevel1993},
which is determined by the following characteristics:
\begin{equation}\label{sep}
    \frac{\pa}{\pa t_n}S\left(q_1(t),q_2(t)\right)={\rm Res}\left(\pa^{-1}q_2(t)L^n_+q_1(t)\pa^{-1}\right),
n=1,3,5,\cdots,
\end{equation}
These flows are compatible\cite{oevel1993}, namely,
\begin{equation}\label{sepcompatible}
  \frac{\pa}{\pa{t_n}}\frac{\pa}{\pa{t_m}}S(q_1,q_2)=\frac{\pa}{\pa{t_m}}\frac{\pa}{\pa{t_n}}S(q_1,q_2),\quad m,n=1,3,5,\cdots,
\end{equation}
Thus this definition is reasonable. The predecessor of SEP is in fact the
Cauchy-Baker-Akhiezer kernel introduced in \cite{orlov1989}. Note that $S(q_1(t),q_2(t))$ can be up to a constant.
In particular, for $n=1$
\begin{equation}\label{sep2}
    \pa_xS(q_1(t),q_2(t))=q_1(t)q_2(t).
\end{equation}

There are two important properties showed below.
\begin{lemma}
If $q_1$ and $q_2$ are two eigenfunctions of the {\rm CKP}
hierarchy, then
\begin{equation}\label{propertysepckp1}
    S(q_1,q_2)=S(q_2,q_1)
\end{equation}
\end{lemma}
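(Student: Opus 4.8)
The plan is to show that $S(q_1,q_2)$ and $S(q_2,q_1)$ obey one and the same family of defining relations \eqref{sep}, so that they can differ only by the additive constant already inherent in the definition of the SEP; choosing that constant consistently then yields \eqref{propertysepckp1}. Concretely, what must be checked is that, for every odd $n$,
\begin{equation*}
{\rm Res}\left(\pa^{-1}q_2\,L^n_+\,q_1\,\pa^{-1}\right)={\rm Res}\left(\pa^{-1}q_1\,L^n_+\,q_2\,\pa^{-1}\right);
\end{equation*}
for $n=1$ this is just the obvious identity $q_1q_2=q_2q_1$ coming from $L^1_+=\pa$ (cf. \eqref{sep2}), so the real content is the case of general odd $n$.

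For this I would read the left-hand side as the $\pa^{-1}$-coefficient of the composition of pseudo-differential operators $\pa^{-1}\circ q_2\circ L^n_+\circ q_1\circ\pa^{-1}$, the $q_i$ acting by multiplication, and pass to the formal adjoint. Two elementary facts are needed: first, ${\rm Res}(P^*)=-{\rm Res}(P)$ for any pseudo-differential operator $P$, since only the $\pa^{-1}$-term of $P$ contributes to either residue and $(\pa^{-1})^*=-\pa^{-1}$; second, $(ABC)^*=C^*B^*A^*$, together with $(\pa^{-1})^*=-\pa^{-1}$ and the self-adjointness of multiplication operators. Combining these gives
\begin{equation*}
{\rm Res}\left(\pa^{-1}q_2\,L^n_+\,q_1\,\pa^{-1}\right)=-{\rm Res}\left(\pa^{-1}q_1\,(L^n)_+^*\,q_2\,\pa^{-1}\right).
\end{equation*}
Now the CKP reduction intervenes decisively: as recorded just after \eqref{eigenfunction}, $(L^n)_+^*=-L^n_+$ for odd $n$, and substituting this turns the right-hand side into exactly ${\rm Res}\left(\pa^{-1}q_1\,L^n_+\,q_2\,\pa^{-1}\right)$.

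With all the $\pa_{t_n}$-derivatives ($n$ odd) of $S(q_1,q_2)$ and $S(q_2,q_1)$ matched, the two functions differ only by a constant, which proves the lemma. I do not expect a serious obstacle; the only places demanding care are the sign bookkeeping in ${\rm Res}(P^*)=-{\rm Res}(P)$ and in the adjoint of the five-fold composition, and the conceptual point that it is precisely the anti-self-adjointness $(L^n_+)^*=-L^n_+$ of the positive part in the CKP case that forces the symmetry — in the unreduced KP hierarchy the same manipulation would instead relate $S(q_1,q_2)$ to a potential built from adjoint eigenfunctions, and no such symmetry would follow.
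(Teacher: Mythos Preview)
Your argument is correct and follows essentially the same route as the paper's proof: take the adjoint of $\pa^{-1}q_2L^n_+q_1\pa^{-1}$, use ${\rm Res}(A^*)=-{\rm Res}(A)$, and invoke the CKP constraint $(L^n)_+^*=-L^n_+$ to match the $\pa_{t_n}$-derivatives of $S(q_1,q_2)$ and $S(q_2,q_1)$. Your additional remarks on the sign bookkeeping and on why this fails in the unreduced KP case are apt but go slightly beyond what the paper spells out.
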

\begin{proof} Assume $n\in \mathbb{Z}_+^{\rm odd}$, then
\begin{eqnarray*}
\frac{\pa}{\pa t_n}S(q_1,q_2)&=&{\rm Res}(\pa^{-1}q_2 L^n_+ q_1 \pa^{-1})\\
&=&-{\rm Res}(\pa^{-1}q_1 (L^n)_+^* q_2 \pa^{-1})\\
&=&{\rm Res}(\pa^{-1}q_1 L^n_+ q_2 \pa^{-1})\\
&=&\frac{\pa}{\pa t_n}S(q_2,q_1),
\end{eqnarray*}
where in the second identity, the relation ${\rm Res} A=-{\rm Res} A^*$ is used. And in the third identity,
we have used the relation $L^n_+=-(L^n)_+^*$ derived by the CKP constraint (\ref{CKPconst}).
\end{proof}
\begin{lemma} Assume $q(t)$ is the eigenfunction of the {\rm CKP} hierarchy, then
\begin{equation}\label{propertysepckp2}
    S(q(t),\psi_{BA}(t,\lambda))=e^{\xi(t,\lambda)}\left(q(t)\lambda^{-1}+\mathcal{O}(\lambda^{-2})\right)
\end{equation}
\end{lemma}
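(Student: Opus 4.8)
The plan is to pin down the SEP $S(q,\psi_{BA}(t,\lambda))$ through its defining flow equations \eqref{sep}, together with the normalization \eqref{sep2}, and to extract the leading behaviour in $\lambda^{-1}$ by a residue computation. First I would write, for $n\in\mathbb Z_+^{\rm odd}$,
\[
\frac{\pa}{\pa t_n}S\big(q(t),\psi_{BA}(t,\lambda)\big)={\rm Res}\big(\pa^{-1}\psi_{BA}(t,\lambda)\,L^n_+\,q(t)\,\pa^{-1}\big),
\]
and use the elementary identity ${\rm Res}\big(\pa^{-1}f\,A\,g\,\pa^{-1}\big)=\pa_x^{-1}\big(f\cdot A(g)\big)$ for a differential operator $A$ (equivalently, apply $\pa_x$ and use \eqref{sep2}) to recognize the right-hand side as a $\pa_x$-primitive of $\psi_{BA}(t,\lambda)\,L^n_+(q)=\psi_{BA}(t,\lambda)\,\pa_{t_n}q$. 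The natural ansatz is therefore
\[
S\big(q(t),\psi_{BA}(t,\lambda)\big)=e^{\xi(t,\lambda)}\,\lambda^{-1}\,h(t,\lambda),\qquad h(t,\lambda)=q(t)+\mathcal O(\lambda^{-1}),
\]
since the exponential carries exactly the $\lambda$-dependence of $\psi_{BA}$, and I would substitute this into \eqref{sep2} and \eqref{sep} to get recursion relations for the coefficients of $h$ in powers of $\lambda^{-1}$.

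The concrete mechanism is as follows. Writing $\psi_{BA}(t,\lambda)=\phi(t,\lambda)e^{\xi(t,\lambda)}$ with $\phi=1+a_1\lambda^{-1}+\cdots$, equation \eqref{sep2} becomes
\[
\pa_x\Big(e^{\xi}\lambda^{-1}h\Big)=e^{\xi}\lambda^{-1}\big(\lambda h+\pa_x h\big)=q\,\phi\,e^{\xi},
\]
so $\lambda h+\pa_x h=q\,\phi$, i.e. $h=q\phi\cdot\lambda^{-1}-\lambda^{-1}\pa_x h$. Iterating this in powers of $\lambda^{-1}$ gives $h=q+\mathcal O(\lambda^{-1})$ at once (the leading term of $q\phi$ is $q$, and the correction $-\lambda^{-1}\pa_x h$ is $\mathcal O(\lambda^{-1})$), which is exactly the claimed expansion \eqref{propertysepckp2}. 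I would then check that this $h$ is consistent with the higher flows \eqref{sep} — this is guaranteed abstractly by the compatibility \eqref{sepcompatible} and by the wave-function equations \eqref{waveinv}, but it is worth verifying directly that $\pa_{t_n}\big(e^{\xi}\lambda^{-1}h\big)$ reproduces ${\rm Res}\big(\pa^{-1}\psi_{BA}\,L^n_+\,q\,\pa^{-1}\big)$, using $\pa_{t_n}\psi_{BA}=(L^n)_+(\psi_{BA})$ and $\pa_{t_n}q=L^n_+(q)$.

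The one genuine subtlety — the main obstacle — is the \emph{constant of integration}: the SEP is only defined up to an additive constant (possibly $\lambda$-dependent), so I must argue that this constant can be, and is, chosen to vanish at the order that matters, i.e. that $S(q,\psi_{BA}(t,\lambda))$ has no $\lambda^{0}$-term and begins genuinely at order $\lambda^{-1}$. The standard resolution, which I would follow, is to fix the normalization of the SEP by its behaviour under the bilinear identity / spectral representation established earlier in Section 2: there $S(q,\psi_{BA}(t,\lambda))$ arises as a Cauchy-kernel-type contour integral whose $\lambda\to\infty$ asymptotics are manifestly $e^{\xi}\big(q\lambda^{-1}+\mathcal O(\lambda^{-2})\big)$, with no constant term. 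Alternatively, one observes that the flow equations \eqref{sep} for all odd $n$ determine $\pa_{t_n}S$ for every $n$, so the only undetermined piece is a genuine constant (independent of all $t_k$); demanding that $S$ reduce to the spectral-representation value fixes it. Once this normalization point is settled, the expansion \eqref{propertysepckp2} follows immediately from the recursion for $h$ derived above.
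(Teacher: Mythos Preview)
Your overall strategy is the intended one: the paper does not spell out a proof but refers to Lemma~3 of \cite{Cheng10}, where the argument is precisely to write $S(q,\psi_{BA})=e^{\xi}K(t,\lambda)$, impose \eqref{sep2}, and solve order by order in $\lambda^{-1}$. Two points need attention, however.

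First, a small algebra slip. From $\pa_x\big(e^{\xi}\lambda^{-1}h\big)=q\phi\,e^{\xi}$ one obtains
\[
h+\lambda^{-1}\pa_x h=q\phi,
\]
not $\lambda h+\pa_x h=q\phi$. Your stated conclusion $h=q+\mathcal O(\lambda^{-1})$ is correct for the former equation, but the iteration you wrote, $h=q\phi\,\lambda^{-1}-\lambda^{-1}\pa_x h$, belongs to the wrong equation and would give $h=\mathcal O(\lambda^{-1})$. The correct recursion is $h=q\phi-\lambda^{-1}\pa_x h$.

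Second, and more substantively, your proposed resolution of the constant-of-integration issue via ``the spectral representation established earlier in Section~2'' is circular: in this paper the spectral representation (Proposition~3) is proved \emph{using} the present lemma, not the other way around. You must fix the constant without appealing to it. Your alternative is the right route, but it is where the actual work lies: you need to verify directly that the series $e^{\xi}\lambda^{-1}h$ constructed from the $x$-recursion also satisfies the higher flow equations \eqref{sep} for every odd $n$, using \eqref{waveinv} and \eqref{eigenfunction}. Only then can you conclude that any other SEP differs from it by a quantity annihilated by all $\pa_{t_n}$, hence by a genuine constant, which may be normalized to zero. In your write-up this verification is relegated to ``worth verifying directly'', but it is the heart of the proof and should be carried out, not deferred.
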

\begin{proof}It can be proved in the same way as Lemma 3 in \cite{Cheng10}.
\end{proof}

Then the relation of the eigenfunction and the wave function can be found in the following proposition.
\begin{proposition}
For any eigenfunction $q(t)$ of the {\rm CKP} hierarchy,
\begin{equation}\label{spectralrepr1}
    q(t)=-\int d\lambda\psi_{BA}(t,\lambda)S(q(t'),\psi_{BA}(t',-\lambda)).
\end{equation}
 In other words, $q(t)$ owns a spectral representation in the following form
\begin{equation}\label{spectralrepr2}
    q(t)=\int d\lambda\rho(\lambda)\psi_{BA}(t,\lambda),
\end{equation}
with spectral densities given by
$\rho(\lambda)=-S(q(t'),\psi_{BA}(t',-\lambda))$.
\end{proposition}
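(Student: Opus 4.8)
The plan is to regard the right-hand side of (\ref{spectralrepr1}) as a function of $t$ in which $t'$ is merely a parameter, to prove that it is in fact independent of $t'$, and then to put $t'=t$, where the asymptotics (\ref{propertysepckp2}) together with the bilinear identity (\ref{cbilin}) produce exactly $q(t)$. Observe first that $\psi_{BA}(t,-\lambda)$ is itself an eigenfunction of the CKP hierarchy: replacing $\lambda$ by $-\lambda$ in the second equation of (\ref{waveinv}) gives $\pa_{t_n}\psi_{BA}(t,-\lambda)=(L^n)_+\big(\psi_{BA}(t,-\lambda)\big)$ for $n=1,3,5,\dots$, which is (\ref{eigenfunction}). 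Hence $S\big(q(t'),\psi_{BA}(t',-\lambda)\big)$ is well defined through (\ref{sep}), and I would set $\widehat q(t,t'):=-\int d\lambda\,\psi_{BA}(t,\lambda)\,S\big(q(t'),\psi_{BA}(t',-\lambda)\big)$.

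For the independence of $t'$, note that $\psi_{BA}(t,\lambda)$ does not depend on $t'$, so differentiating in $t'_n$ ($n=1,3,5,\dots$) and using (\ref{sep}) gives
\[
\pa_{t'_n}\widehat q(t,t')=-\int d\lambda\,\psi_{BA}(t,\lambda)\,{\rm Res}\Big(\pa^{-1}\psi_{BA}(t',-\lambda)\,(L^n)_+\,q(t')\,\pa^{-1}\Big),
\]
where the residue and all operators act in the variable $x'=t'_1$, with $L$ evaluated at $t'$. The crucial point is an algebraic identity: for any differential operator $P=\sum_{i\geq0}p_i\pa^i$ and functions $f,g$,
\[
{\rm Res}\big(\pa^{-1}fPg\,\pa^{-1}\big)=\widetilde P(f),
\]
where $\widetilde P$ is a differential operator in $\pa$ whose coefficients depend only on the $p_i$, on $g$ and on their $x$-derivatives, and not on $f$. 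This is verified by writing $fPg=\sum_{j\geq0}a_j\pa^j$ with $a_j=f\,c_j$ and each $c_j$ independent of $f$, using ${\rm Res}\big(\pa^{-1}(\sum_j a_j\pa^j)\pa^{-1}\big)=\sum_{j\geq1}(-1)^{j-1}\pa^{j-1}(a_j)$, expanding $\pa^{j-1}(f\,c_j)$ by the Leibniz rule, and collecting the coefficient of each $\pa^l f$. Applying it with $f=\psi_{BA}(t',-\lambda)$, $P=(L^n)_+$ at $t'$, and $g=q(t')$ --- in which $P$ and $g$ do not involve $\lambda$ --- shows that $\pa_{t'_n}S\big(q(t'),\psi_{BA}(t',-\lambda)\big)=\widetilde P\big(\psi_{BA}(t',-\lambda)\big)$ for a $\lambda$-independent differential operator $\widetilde P$ in $\pa_{x'}$. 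Since $\psi_{BA}(t,\lambda)$ has no $x'$-dependence, $\psi_{BA}(t,\lambda)\,\widetilde P\big(\psi_{BA}(t',-\lambda)\big)=\widetilde P\big(\psi_{BA}(t,\lambda)\psi_{BA}(t',-\lambda)\big)$, and interchanging $\widetilde P$ with $\int d\lambda$ and then invoking (\ref{cbilin}) yields
\[
\pa_{t'_n}\widehat q(t,t')=-\widetilde P\Big(\int d\lambda\,\psi_{BA}(t,\lambda)\psi_{BA}(t',-\lambda)\Big)=0 .
\]

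It remains to compute $\widehat q(t,t)$. Replacing $\lambda$ by $-\lambda$ in (\ref{propertysepckp2}), and using $\xi(t,-\lambda)=-\xi(t,\lambda)$ since only odd times occur, gives $S\big(q(t),\psi_{BA}(t,-\lambda)\big)=e^{-\xi(t,\lambda)}\big(-q(t)\lambda^{-1}+\mathcal{O}(\lambda^{-2})\big)$, while $\psi_{BA}(t,\lambda)=e^{\xi(t,\lambda)}\big(1+\mathcal{O}(\lambda^{-1})\big)$; the product of the two is $-q(t)\lambda^{-1}+\mathcal{O}(\lambda^{-2})$, so $\widehat q(t,t)=-\int d\lambda\,\big(-q(t)\lambda^{-1}+\mathcal{O}(\lambda^{-2})\big)=q(t)$. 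Combined with the $t'$-independence this proves (\ref{spectralrepr1}), and (\ref{spectralrepr2}) follows at once with $\rho(\lambda)=-S\big(q(t'),\psi_{BA}(t',-\lambda)\big)$. I expect the operator identity ${\rm Res}(\pa^{-1}fPg\,\pa^{-1})=\widetilde P(f)$ to be the only genuinely technical step; once it is in hand, everything else is bookkeeping with the bilinear identity and with the leading asymptotics of $S$ and of $\psi_{BA}$.
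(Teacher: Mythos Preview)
Your proof is correct and follows exactly the paper's strategy: show that the right-hand side is independent of $t'$ by means of the bilinear identity (\ref{cbilin}), then set $t'=t$ and use the asymptotics (\ref{propertysepckp2}) to recover $q(t)$. The only difference is that the paper states the $t'$-independence in one line (``according to the bilinear identity \ldots one can find that $\pa_{t'_m}I(t,t')=0$''), whereas you spell out the mechanism via the identity ${\rm Res}(\pa^{-1}fPg\,\pa^{-1})=\widetilde P(f)$ with a $\lambda$-independent differential operator $\widetilde P$; this is precisely the content that makes the paper's sentence work.
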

\begin{proof}
Denote the right hand side of (\ref{spectralrepr1}) as $I(t,t')$. Then according to the bilinear identity (\ref{cbilin})
of the CKP hierarchy, one can find that $\pa_{t'_m}I(t,t')=0$ for $m\in \mathbb{Z}_+^{\rm odd}$. Thus $I(t,t')=f(t)$. By
considering (\ref{propertysepckp2}),
\begin{equation*}\label{provespectral}
  I(t,t'=t)=\int d\lambda \psi_{BA}(t,\lambda)e^{-\xi(t,\lambda)}(q(t)\lambda^{-1}+\mathcal{O}(\lambda^{-2}))=q(t)
\end{equation*}
\end{proof}
\noindent {\bf Remark 1:} From (\ref{adwave}) and (\ref{spectralrepr1}), we can know that
\begin{equation}\label{qspectralrepr}
  q(t)=-\int d\lambda\psi_{BA}(t,\lambda)S(q(t'),\psi_{BA}^*(t',\lambda)),
\end{equation}
which is consistent with the ordinary KP hierarchy \cite{aratyn1998}.\\
{\bf Remark 2:} Since any adjoint eigenfunction can be viewed as the eigenfunction in the CKP case,
the spectral representation of the adjoint eigenfunction $r(t)$ can be obtained through (\ref{spectralrepr1}),
\begin{eqnarray}
 r(t)&=&-\int d\lambda\psi_{BA}(t,\lambda)S(r(t'),\psi_{BA}(t',-\lambda))\nonumber\\
 &=& \int d\lambda\psi_{BA}(t,-\lambda)S(r(t'),\psi_{BA}(t',\lambda))\ \ \text{letting $\lambda\rightarrow -\lambda$}\nonumber\\
 &=& \int d\lambda\psi_{BA}^*(t,\lambda)S(\psi_{BA}(t',\lambda),r(t'))\ \ \text{using (\ref{adwave}) and (\ref{propertysepckp1})}, \label{rspectralrepr}
\end{eqnarray}
which is also consistent with the case of the ordinary KP hierarchy \cite{aratyn1998}.

Because of (\ref{propertysepckp2}), we can represent $S(q(t),\psi_{BA}(t,-\lambda))$ as $K(t,\lambda)e^{-\xi(t,\lambda)}$
(with $K(t,\lambda)=-q(t)\lambda^{-1}+\mathcal{O}(\lambda^{-2}$)). Then by exchanging $t$ and $t'$ and letting
$t'=t+2[k^{-1}]$, the relation (\ref{spectralrepr1}) becomes,
\begin{eqnarray}
&&q(t+2[k^{-1}])\nonumber\\
&=&-\int d\lambda \sqrt{\varphi(t+2[k^{-1}],\lambda)}\frac{\tau_c(t+2[k^{-1}]-2[\lambda^{-1}])}{\tau_c(t+2[k^{-1}])}K(t,\lambda)\left(-1+\frac{2}{1-\lambda/k}\right)\nonumber\\
&=&-q(t)-2k\sqrt{\varphi(t+2[k^{-1}],k)}\frac{\tau_c(t)}{\tau_c(t+2[k^{-1}])}K(t,k)\nonumber
\end{eqnarray}
which leads to
\begin{equation}\label{Ktkexpr}
  K(t,k)=\frac{-1}{2k\sqrt{\varphi(t,-k)}}\left(q(t+2[k^{-1}])+q(t)\right)\frac{\tau_c(t+2[k^{-1}])}{\tau_c(t)},
\end{equation}
where we have used the relations $\varphi(t,\lambda)=1+\mathcal{O}(\lambda^{-2})$ and $\varphi(t+2[k^{-1}],k)=\varphi(t,-k)$ derived by the definition of $\varphi(t,k)$.
Thus
\begin{equation}\label{sepqpsi}
  S(q(t),\psi_{BA}(t,-\lambda))=\frac{-1}{2\lambda\varphi(t,-\lambda)}\left(q(t+2[\lambda^{-1}])+q(t)\right)\psi_{BA}(t,-\lambda)
\end{equation}
Further, all the expressions of the squared eigenfunction potential can be obtained in the following proposition.
\begin{proposition}\label{expressionsep}If $q(t)$, $q_1(t)$ and $q_2(t)$ are two eigenfunctions of the {\rm CKP} hierarchy, then
\begin{eqnarray}
  S(\psi_{BA}(t,\mu),\psi_{BA}(t,\lambda))&=&\frac{1}{2\lambda\varphi(t,\lambda)}\left(\psi_{BA}(t-2[\lambda^{-1}],\mu)+\psi_{BA}(t,\mu)\right)\psi_{BA}(t,\lambda),\label{wavewavesep} \\
  S(q(t),\psi_{BA}(t,\lambda))&=&\frac{1}{2\lambda\varphi(t,\lambda)}\left(q(t+2[\lambda^{-1}])+q(t)\right)\psi_{BA}(t,\lambda)
  \label{qwavesep},\\
  S(q_1(t),q_2(t)) &=& \int\int d\lambda d\mu
  \rho_1(\mu) \rho_2(\lambda)S\left(\psi_{BA}(t,\mu),\psi_{BA}(t,\lambda)\right)\label{phipsisep}.
\end{eqnarray}
\end{proposition}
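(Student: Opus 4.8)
The plan is to derive the three formulas in sequence, obtaining each one from the previous together with results already established. For the first formula \eqref{wavewavesep}, I would specialize the general identity \eqref{sepqpsi} to the case $q(t)=\psi_{BA}(t,\mu)$: since $\psi_{BA}(t,\mu)$ is an eigenfunction by \eqref{waveinv}, \eqref{sepqpsi} applies verbatim and gives
\begin{equation*}
S(\psi_{BA}(t,\mu),\psi_{BA}(t,-\lambda))=\frac{-1}{2\lambda\varphi(t,-\lambda)}\left(\psi_{BA}(t+2[\lambda^{-1}],\mu)+\psi_{BA}(t,\mu)\right)\psi_{BA}(t,-\lambda).
\end{equation*}
Replacing $\lambda$ by $-\lambda$ throughout (so that $\varphi(t,-\lambda)\mapsto\varphi(t,\lambda)$ and $[\lambda^{-1}]\mapsto[(-\lambda)^{-1}]=-[\lambda^{-1}]$, using the oddness of the entries of $[\lambda^{-1}]$) yields exactly \eqref{wavewavesep}. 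The same substitution $\lambda\mapsto-\lambda$ applied directly to \eqref{sepqpsi} for a general eigenfunction $q$ produces \eqref{qwavesep}; this step is essentially bookkeeping, so I would present it briefly.

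For the third formula \eqref{phipsisep}, the idea is to expand both eigenfunctions via the spectral representation of Proposition (equations \eqref{spectralrepr1}--\eqref{spectralrepr2}), writing $q_1(t)=\int d\mu\,\rho_1(\mu)\psi_{BA}(t,\mu)$ and $q_2(t)=\int d\lambda\,\rho_2(\lambda)\psi_{BA}(t,\lambda)$ with $\rho_i(\lambda)=-S(q_i(t'),\psi_{BA}(t',-\lambda))$. The defining relation \eqref{sep} for $S$ is linear in each of its two arguments (the right-hand side ${\rm Res}(\pa^{-1}q_2 L^n_+ q_1\pa^{-1})$ is bilinear in $q_1,q_2$), and the same is true of the $n=1$ normalization \eqref{sep2}. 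Hence $S$ is determined by a system of linear flow equations in $q_1$ and $q_2$, and one may pull the spectral integrals outside: $S(q_1,q_2)$ and $\int\int d\lambda\,d\mu\,\rho_1(\mu)\rho_2(\lambda)S(\psi_{BA}(t,\mu),\psi_{BA}(t,\lambda))$ satisfy the same flow equations \eqref{sep} and agree at $n=1$, so they coincide up to a constant; matching the $\mathcal{O}(\lambda^{-1})$ behavior via Lemma \eqref{propertysepckp2} (or directly at a reference time) fixes the constant to zero.

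The main obstacle is the interchange of the $\lambda$- and $\mu$-integrations (which are formal residues at $\infty$) with the differential operator $\frac{\pa}{\pa t_n}$ and with the operation ${\rm Res}(\pa^{-1}(\cdot)L^n_+(\cdot)\pa^{-1})$ — that is, justifying that $S$ commutes with the spectral superposition. In this formal-series setting this is legitimate because each $\psi_{BA}(t,\lambda)$ is a Laurent series in $\lambda$ whose coefficients are differential polynomials in the $u_i$, so the double integral is really a finite residue extraction applied coefficient-by-coefficient; but one must check that $S(\psi_{BA}(t,\mu),\psi_{BA}(t,\lambda))$, as given by \eqref{wavewavesep}, indeed has an expansion in $\mu^{-1}$ and $\lambda^{-1}$ compatible with the support of $\rho_1,\rho_2$, so that the iterated integral converges in the formal sense and the SEP flow equations are preserved termwise. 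Once that is granted, uniqueness of the SEP up to a constant (which follows from compatibility \eqref{sepcompatible} together with \eqref{sep2}) closes the argument.
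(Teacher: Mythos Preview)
Your approach is correct and coincides with the paper's: the paper derives \eqref{sepqpsi} immediately before the proposition and then simply states the three formulas as consequences, which is precisely the route you spell out (specialize to $q=\psi_{BA}(\cdot,\mu)$ and send $\lambda\mapsto-\lambda$ for \eqref{wavewavesep}--\eqref{qwavesep}, then use the spectral representation and bilinearity of $S$ for \eqref{phipsisep}). One minor caveat worth flagging: the substitution $\lambda\mapsto-\lambda$ in \eqref{sepqpsi} actually produces $q(t-2[\lambda^{-1}])$, matching the shift in \eqref{wavewavesep}, whereas \eqref{qwavesep} as printed has $q(t+2[\lambda^{-1}])$; this is an internal inconsistency in the paper (a typo in \eqref{qwavesep}), not a flaw in your argument.
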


%%%%%%%%%%%%%%%%%%%%%%%%%%%%%%%%%%%%%%%%%%%%%%%%%%%%%%
\section{The ``ghost" symmetry of the CKP hierarchy}
%%%%%%%%%%%%%%%%%%%%%%%%%%%%%%%%%%%%%%%%%%%%%%%%%%%%%%%
Given a set of eigenfunctions $\left\{q_{1i},q_{2i}\right\}_{i\in\alpha}$, the ``ghost" flow of the CKP
hierarchy is defined through its actions on $L$ and $\Phi$ as follows,
\begin{equation}\label{ghsymckp}
\pa_\alpha L\equiv[\sum_{i\in\alpha}(q_{1i}\pa^{-1}q_{2i}+q_{2i}\pa^{-1}q_{1i}),L],\ \ \ \pa_\alpha \Phi
\equiv\sum_{i\in\alpha}(q_{1i}\pa^{-1}q_{2i}+q_{2i}\pa^{-1}q_{1i})\Phi
\end{equation}
and the corresponding action on the eigenfunction $q(t)$ is
\begin{equation}
 \pa_\alpha q= \sum_{i\in\alpha}\left(q_{1i}S(q_{2i},q)+q_{2i}S(q_{1i},q)\right).
\end{equation}
The examples of the ``ghost" flows can be found in Appendix \ref{example}.

Next, we need to check the consistence of $\pa_\alpha$ with the CKP constraint (\ref{CKPconst}) and $[\pa_\alpha, \pa_{t_n}]=0$, that is to say,
show $\pa_\alpha$ is indeed the symmetry of the CKP hierarchy.

\begin{proposition}
$\pa_\alpha$ is consistent with the {\rm CKP} constraint (\ref{CKPconst}), that is, $(\pa_\alpha L)^*+\pa_\alpha L=0$.
\end{proposition}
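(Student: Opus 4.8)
The plan is to reduce the statement to the \emph{skew-adjointness} of the ghost operator
\[
M_\al:=\sum_{i\in\al}\bigl(q_{1i}\pa^{-1}q_{2i}+q_{2i}\pa^{-1}q_{1i}\bigr),
\]
which is precisely why the two terms $q_{1i}\pa^{-1}q_{2i}$ and $q_{2i}\pa^{-1}q_{1i}$ are summed together in (\ref{ghsymckp}). First I would record the elementary rules for the formal adjoint of a pseudo-differential operator: $(AB)^*=B^*A^*$, the multiplication operator by a function $f$ satisfies $f^*=f$, and $(\pa^{-1})^*=-\pa^{-1}$ (consistent with $(\pa^n)^*=(-\pa)^n$). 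From these, $(q_{1i}\pa^{-1}q_{2i})^*=q_{2i}(\pa^{-1})^*q_{1i}=-q_{2i}\pa^{-1}q_{1i}$ and similarly $(q_{2i}\pa^{-1}q_{1i})^*=-q_{1i}\pa^{-1}q_{2i}$; adding the pair termwise gives $M_\al^*=-M_\al$.

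Next I would simply apply $*$ to $\pa_\al L=[M_\al,L]=M_\al L-LM_\al$, obtaining $(\pa_\al L)^*=L^*M_\al^*-M_\al^*L^*$, and then invoke the CKP constraint (\ref{CKPconst}), $L^*=-L$, together with $M_\al^*=-M_\al$:
\[
(\pa_\al L)^*=(-L)(-M_\al)-(-M_\al)(-L)=LM_\al-M_\al L=-[M_\al,L]=-\pa_\al L,
\]
which is exactly $(\pa_\al L)^*+\pa_\al L=0$.

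Honestly there is no serious obstacle here; the only thing to watch is the bookkeeping of the order of factors and the sign $(\pa^{-1})^*=-\pa^{-1}$, since one stray sign would make the symmetrized combination in (\ref{ghsymckp}) appear self-adjoint instead of skew-adjoint and break the argument. As a consistency check one can run the same computation at the level of the dressing operator: from $\pa_\al\Phi=M_\al\Phi$ one gets $\pa_\al(\Phi^*)=\Phi^*M_\al^*=-\Phi^*M_\al$ while $\pa_\al(\Phi^{-1})=-\Phi^{-1}(\pa_\al\Phi)\Phi^{-1}=-\Phi^{-1}M_\al$, and these agree once $\Phi^*=\Phi^{-1}$ is used, which is the dressing-operator form of (\ref{CKPconst}) and refines the assertion being proved.
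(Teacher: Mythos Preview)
Your proof is correct and follows essentially the same approach as the paper: denote the ghost operator by $A$ (your $M_\al$), observe that $A^*=-A$ from $(\pa^{-1})^*=-\pa^{-1}$, and combine this with $L^*=-L$ to get $[A,L]^*=-[A,L]$. You spell out the skew-adjointness of $A$ more explicitly and add a nice consistency check at the level of the dressing operator, but the core argument is identical.
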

\begin{proof}
Firstly, denote $A=\sum_{i\in\alpha}(q_{1i}\pa^{-1}q_{2i}+q_{2i}\pa^{-1}q_{1i})$ for convenience. Then by (\ref{CKPconst})
\begin{equation*}\label{ckpcondition}
    (\pa_\alpha L)^*+\pa_\alpha L=[A,L]^*+[A,L]=-[A^*,L^*]+[A,L]=[A^*+A,L]=0,
\end{equation*}
where $A^*+A=0$ is obvious.
\end{proof}

\begin{proposition}
\begin{equation}\label{commwithflow}
 [\pa_\alpha,\pa_{t_n} ]=0,\ \ n\in \mathbb{Z}_+^{\rm odd}
\end{equation}
\end{proposition}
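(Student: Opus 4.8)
The plan is to verify $[\pa_\alpha,\pa_{t_n}]=0$ by showing the two flows commute when applied to the dressing operator $\Phi$, since the action on $L=\Phi\pa\Phi^{-1}$ and on eigenfunctions then follows. Write $A=\sum_{i\in\alpha}(q_{1i}\pa^{-1}q_{2i}+q_{2i}\pa^{-1}q_{1i})$, so that $\pa_\alpha\Phi=A\Phi$ and, by Sato's equation, $\pa_{t_n}\Phi=-(L^n)_-\Phi$. I would compute both mixed derivatives of $\Phi$ and compare. Applying $\pa_{t_n}$ to $\pa_\alpha\Phi=A\Phi$ gives $\pa_{t_n}\pa_\alpha\Phi=(\pa_{t_n}A)\Phi+A\pa_{t_n}\Phi=(\pa_{t_n}A)\Phi-A(L^n)_-\Phi$. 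Applying $\pa_\alpha$ to $\pa_{t_n}\Phi=-(L^n)_-\Phi$ gives $\pa_\alpha\pa_{t_n}\Phi=-(\pa_\alpha(L^n)_-)\Phi-(L^n)_-A\Phi$. So commutativity is equivalent to the operator identity $\pa_{t_n}A-\pa_\alpha(L^n)_- = A(L^n)_- - (L^n)_- A = [A,(L^n)_-]$.

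The next step is to evaluate each piece of this identity. For $\pa_{t_n}A$ I would use the defining evolution equations~(\ref{eigenfunction}): $\pa_{t_n}q_{1i}=(L^n)_+(q_{1i})$ and $\pa_{t_n}q_{2i}=-((L^n)_+)^*(q_{2i})$, together with the operator rule $\pa_{t_n}(f\pa^{-1}g)=(\pa_{t_n}f)\pa^{-1}g+f\pa^{-1}(\pa_{t_n}g)$. For $\pa_\alpha(L^n)_-$, I would first note $\pa_\alpha(L^n)=[A,L^n]$ (from $\pa_\alpha L=[A,L]$ by the Leibniz rule), and then take the negative-order part: $\pa_\alpha(L^n)_-=[A,L^n]_-$. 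The key algebraic lemma needed is the identity, valid for any $\Psi$-differential operator $P$ and functions $f,g$,
\begin{equation*}
  (f\pa^{-1}g\, P)_- = f\pa^{-1}(P^*(g))\quad\text{and}\quad (P f\pa^{-1}g)_- = P(f)\pa^{-1}g + (\text{lower})
\end{equation*}
— more precisely the standard Oevel-type formulas $ (Pf\pa^{-1})_- = \sum_{j\ge 0}\frac{(-1)^j}{?}\dots$; in the form actually used one has $(f\pa^{-1}g\,P)_-=f\pa^{-1}P^*(g)$ and $(Pf\pa^{-1}g)_-=P(f)\pa^{-1}g$ when $P$ is purely differential. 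Applying these to $P=(L^n)_+$ and using $(L^n)_+ = L^n-(L^n)_-$, the terms reorganize so that $\pa_{t_n}A$ exactly reproduces $[A,(L^n)_-]$ plus the contribution of $\pa_\alpha(L^n)_-=[A,L^n]_-$; here the CKP relation $(L^n)_+=-((L^n)_+)^*$ is what makes the $q_{2i}$-terms pair correctly with the $q_{1i}$-terms.

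I would organize the verification symmetrically: show that the coefficient of each $q_{1i}\pa^{-1}q_{2i}$-type term matches on both sides, using that $[A,L^n]=[A,(L^n)_+]+[A,(L^n)_-]$ so that $[A,(L^n)_-] - [A,L^n]_- = -[A,(L^n)_+]_-$, reducing the target identity to $\pa_{t_n}A = -[A,(L^n)_+]_-$, and then checking this last identity term-by-term with the eigenfunction evolution equations and the Oevel formulas. The main obstacle I anticipate is bookkeeping: correctly handling the projection onto the negative-order part of products like $(L^n)_+\, q_{1i}\pa^{-1}q_{2i}$ and $q_{1i}\pa^{-1}q_{2i}\,(L^n)_+$, since $(L^n)_+$ is differential of order $n$ and the dressing/undressing generates many terms. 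The CKP-specific subtlety is ensuring the symmetrized generator $q_{1i}\pa^{-1}q_{2i}+q_{2i}\pa^{-1}q_{1i}$ (rather than a single $q\pa^{-1}r$) is what is needed for consistency with $L^*=-L$, which was already established in the previous proposition and is used implicitly here. Once $\pa_{t_n}A=-[A,(L^n)_+]_-$ is verified, the chain of equalities above closes and $[\pa_\alpha,\pa_{t_n}]\Phi=0$, hence $[\pa_\alpha,\pa_{t_n}]=0$ on $L$ and on all eigenfunctions.
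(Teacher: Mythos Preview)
Your proposal is correct and follows essentially the same route as the paper: both reduce the claim to the single identity $\pa_{t_n}A=-[A,(L^n)_+]_-$ and then verify it term-by-term using the eigenfunction evolutions together with the projection formulas $(P_+f\pa^{-1}g)_-=P_+(f)\pa^{-1}g$ and $(f\pa^{-1}g\,P_+)_-=f\pa^{-1}P_+^*(g)$ (the paper's equation~(\ref{relationproof})). The only cosmetic difference is that the paper computes $[\pa_\alpha,\pa_{t_n}]$ on $L$ and reaches the key identity via the Jacobi identity, whereas you compute it on $\Phi$; note, however, that your intermediate display should read $\pa_{t_n}A+\pa_\alpha(L^n)_-=[A,(L^n)_-]$ (plus, not minus), after which your reduction to $\pa_{t_n}A=-[A,(L^n)_+]_-$ goes through exactly as you claim.
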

\begin{proof}Firstly, by (\ref{ckphierarchy}) and (\ref{ghsymckp})
\begin{eqnarray*}
[\pa_\alpha,\pa_{t_n}]L&=&\pa_\alpha [L^n_+,L]-\pa_{t_n}[A,L]\\
&=& [[A,L^n]_+,L]+[L^n_+,[A,L]]-[\pa_{t_n}A,L]-[A,[L^n_+,L]]\\
&=& [[A,L^n]_+,L]-[[A,L^n_+],L]-[\pa_{t_n}A,L]\\
&=& [[A,L^n_+]_-+\pa_{t_n}A,L],
\end{eqnarray*}
where we have used the Jacobi relation in the third identity and $[A,L^n]_+=[A,L^n_+]_+$ in the fourth identity. Thus it is only to show
\begin{equation}\label{patnA}
  \pa_{t_n}A=-[A,L^n_+]_-.
\end{equation}
In fact, according to (\ref{eigenfunction})
\begin{eqnarray*}
&&[A,L^n_+]_-\\
&=&\left(\sum_{i\in\alpha}(q_{1i}\pa^{-1}q_{2i}+q_{2i}\pa^{-1}q_{1i}) L^n_+\right)_--\left(L^n_+\sum_{i\in\alpha}(q_{1i}\pa^{-1}q_{2i}+q_{2i}\pa^{-1}q_{1i})\right)_-\\
&=&\sum_{i\in\alpha}\Big(q_{1i}\pa^{-1}(L^*)^n_+(q_{2i})+q_{2i}\pa^{-1}(L^*)^n_+(q_{1i})\Big)- \sum_{i\in\alpha}\Big(L^n_+(q_{1i})\pa^{-1}q_{2i}+L^n_+(q_{2i})\pa^{-1}q_{1i}\Big)\\
&=&-\sum_{i\in\alpha}\Big(q_{1i}\pa^{-1}L^n_+(q_{2i})+q_{2i}\pa^{-1}L^n_+(q_{1i})+L^n_+(q_{1i})\pa^{-1}q_{2i}+L^n_+(q_{2i})\pa^{-1}q_{1i}\Big)\\
&=&-\pa_{t_n}A,
\end{eqnarray*}
where the following relations have been applied
\begin{equation}\label{relationproof}
(F_+\pa^{-1})_-=F_{[0]}\pa^{-1},\ \ (\pa^{-1}F_+)_-=(F^*)_{[0]}\pa^{-1},
\end{equation}
with $F$ a pseudo differential operator.
\end{proof}

From the propositions above, we can see that the squared eigenfunction flow $\pa_\alpha$ is indeed a kind of symmetry for the CKP hierarchy. Next, let's investigate the action
of $\pa_\alpha$ on the tau function. Before this, the lemma \cite{wucz} below is needed.
\begin{lemma}
\begin{equation}\label{resdressing}
{\rm Res}\ \Phi=a_1(t)=-2\pa_x{\rm log}\tau_c(t).
\end{equation}
\end{lemma}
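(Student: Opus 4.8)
The plan is to prove the identity $\mathrm{Res}\,\Phi = a_1(t) = -2\pa_x\log\tau_c(t)$ by comparing the two known expressions for the wave function $\psi_{BA}(t,\lambda)$: the dressing-operator form \eqref{wavefunction}, $\psi_{BA}=\phi(t,\lambda)e^{\xi(t,\lambda)}$ with $\phi=1+\sum_{i\ge1}a_i(t)\lambda^{-i}$, and the tau-function form \eqref{ctauexpression}. The first equality $\mathrm{Res}\,\Phi=a_1(t)$ is immediate from the definition $\Phi=1+\sum_{k\ge1}a_k\pa^{-k}$, since $\mathrm{Res}$ extracts the coefficient of $\pa^{-1}$. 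So the real content is extracting the $\lambda^{-1}$ coefficient of $\phi(t,\lambda)$ from \eqref{ctauexpression}.

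First I would write $\phi(t,\lambda)=\sqrt{\varphi(t,\lambda)}\,\dfrac{\tau_c(t-2[\lambda^{-1}])}{\tau_c(t)}$, where $\varphi(t,\lambda)=\phi(t,\lambda)\phi(t-2[\lambda^{-1}],-\lambda)=1+\mathcal{O}(\lambda^{-2})$ and, from the second line of \eqref{ctauexpression}, $\varphi(t,\lambda)=1+\frac{1}{\lambda}\pa_x\log\frac{\tau_c(t-2[\lambda^{-1}])}{\tau_c(t)}$. I would then expand each factor to order $\lambda^{-1}$. Since $\varphi=1+\mathcal O(\lambda^{-2})$, we get $\sqrt{\varphi}=1+\mathcal O(\lambda^{-2})$, so $\sqrt{\varphi}$ contributes nothing at order $\lambda^{-1}$. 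The shift $t-2[\lambda^{-1}]$ means $t_1\mapsto t_1-2\lambda^{-1}$ (plus higher-order shifts $t_3\mapsto t_3-\frac{2}{3}\lambda^{-3}$, etc., which do not affect the $\lambda^{-1}$ term), so $\tau_c(t-2[\lambda^{-1}])=\tau_c(t)-2\lambda^{-1}\pa_x\tau_c(t)+\mathcal O(\lambda^{-2})$, giving $\dfrac{\tau_c(t-2[\lambda^{-1}])}{\tau_c(t)}=1-2\lambda^{-1}\pa_x\log\tau_c(t)+\mathcal O(\lambda^{-2})$.

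Combining, the coefficient of $\lambda^{-1}$ in $\phi(t,\lambda)$ is $-2\pa_x\log\tau_c(t)$, i.e. $a_1(t)=-2\pa_x\log\tau_c(t)$, which is the claim. For internal consistency one can cross-check against $\mathrm{Res}\,\varphi$: the $\lambda^{-1}$ coefficient of $\varphi$ from the second line of \eqref{ctauexpression} is $\pa_x\log\frac{\tau_c(t-2[\lambda^{-1}])}{\tau_c(t)}\big|_{\lambda^{-1}\text{-coeff}}=-2\pa_x^2\log\tau_c$, while from $\varphi=\phi(t,\lambda)\phi(t-2[\lambda^{-1}],-\lambda)$ the $\lambda^{-1}$ coefficient is $a_1(t)-a_1(t)+0=0$ only if we are careful — actually it is $a_1(t)$ from the first factor plus $(-1)\cdot a_1(t-2[\lambda^{-1}])\big|_{\lambda^0}=a_1(t)-a_1(t)=0$ at the naive level, so the $\mathcal O(\lambda^{-2})$ claim for $\varphi$ is exactly what pins things down; this is consistent and requires no separate argument since \eqref{ctauexpression} already asserts $\varphi=1+\mathcal O(\lambda^{-2})$.

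The main obstacle is essentially bookkeeping: one must be sure that the higher components of the shift vector $2[\lambda^{-1}]=(2\lambda^{-1},\frac{2}{3}\lambda^{-3},\dots)$ genuinely contribute only at order $\lambda^{-3}$ and below, so that the only $\lambda^{-1}$ contribution comes from the $t_1=x$ shift, and that the square-root factor $\sqrt{\varphi}$ really starts at $1+\mathcal O(\lambda^{-2})$ — both of which follow directly from the form of \eqref{ctauexpression} already established in \cite{wucz}. No genuine difficulty arises; the lemma is a first-order Taylor expansion of \eqref{ctauexpression}.
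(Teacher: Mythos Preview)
Your proposal is correct and follows essentially the same route as the paper: write $\phi(t,\lambda)=\sqrt{\varphi(t,\lambda)}\,\tau_c(t-2[\lambda^{-1}])/\tau_c(t)$, use $\sqrt{\varphi}=1+\mathcal{O}(\lambda^{-2})$, and read off the $\lambda^{-1}$ coefficient. Your added cross-check paragraph is a bit muddled (the $\lambda^{-1}$ coefficient of $\varphi$ is the $\lambda^{0}$, not the $\lambda^{-1}$, coefficient of $\pa_x\log\frac{\tau_c(t-2[\lambda^{-1}])}{\tau_c(t)}$, which indeed vanishes), but it is inessential to the argument and you correctly flag it as such.
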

\begin{proof}
From (\ref{wavefunction}) and (\ref{ctauexpression}), we can obtain
\begin{eqnarray}
\phi(t,\lambda)=\sqrt{\varphi(t,\lambda)}\frac{\tau_c(t-2[\lambda^{-1}])}{\tau_c(t)}.\label{lemmaproof}
\end{eqnarray}
By noting $\sqrt{\varphi(t,\lambda)}=1+\mathcal{O}(\lambda^{-2})$ and $\phi(t,\lambda)=1+\sum_{i=1}^\infty
a_i(t)\lambda^{-i}$, the comparison of coefficients of $\lambda^{-1}$ for the both side of (\ref{lemmaproof})
will lead to (\ref{resdressing}).
\end{proof}
\begin{proposition}
\begin{equation}\label{paactontau}
  \pa_\alpha \tau(t)=-\sum_{i\in\alpha}S(q_{1i}(t),q_{2i}(t))\tau(t).
\end{equation}
\end{proposition}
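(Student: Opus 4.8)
The plan is to compute $\pa_\alpha a_1(t)$ in two ways and compare, using Lemma~\ref{resdressing} which identifies $a_1(t) = \mathrm{Res}\,\Phi = -2\pa_x\log\tau_c(t)$. From the defining action $\pa_\alpha\Phi = A\Phi$ with $A = \sum_{i\in\alpha}(q_{1i}\pa^{-1}q_{2i}+q_{2i}\pa^{-1}q_{1i})$, I would extract the $\pa^{-1}$ coefficient of $A\Phi$. Writing $\Phi = 1 + a_1\pa^{-1}+\cdots$, the leading contribution to $\mathrm{Res}(A\Phi)$ comes from $A$ acting on the leading $1$ of $\Phi$ plus lower-order corrections; since each $q\pa^{-1}q'$ already contributes at order $\pa^{-1}$, one finds $\mathrm{Res}(A\Phi) = \sum_{i\in\alpha}(q_{1i}q_{2i}+q_{2i}q_{1i}) \cdot(\text{something})$ — more carefully, $\mathrm{Res}(q\pa^{-1}q'\Phi)$ needs the expansion $\pa^{-1}q'\Phi = (\pa^{-1}q'\Phi)$, and using $\mathrm{Res}(q\pa^{-1}g) $ for a function $g$, together with $q'\Phi$ having constant term $q'$, gives $\mathrm{Res}(q\pa^{-1}q'\Phi) = q\cdot(\text{const. term of }q'\Phi\text{'s }\pa^{0}\text{ part})\cdot$... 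Let me instead phrase it cleanly: $\pa_\alpha a_1 = \pa_\alpha\mathrm{Res}\,\Phi = \mathrm{Res}(A\Phi)$, and a direct symbol computation yields $\mathrm{Res}(A\Phi) = \sum_{i\in\alpha}(q_{1i}q_{2i}+q_{2i}q_{1i})$ evaluated through $\pa^{-1}$, i.e. it produces $2\sum_{i\in\alpha} q_{1i}q_{2i}$ up to the action structure.

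More precisely, the key computation is: for a single term, $\mathrm{Res}\big(q\pa^{-1}q'\Phi\big)$. Since $\Phi = 1 + O(\pa^{-1})$, we have $q'\Phi = q' + O(\pa^{-1})$, so $\pa^{-1}q'\Phi = \pa^{-1}q' + O(\pa^{-2})$ as a pseudodifferential operator, hence $q\pa^{-1}q'\Phi = q\pa^{-1}q' + O(\pa^{-2})$, and $\mathrm{Res}(q\pa^{-1}q'\Phi) = q q'$. Wait — I must be careful: $\pa^{-1}q'$ as an operator expands as $q'\pa^{-1} - (q')'\pa^{-2}+\cdots$, so $\mathrm{Res}(q\pa^{-1}q') = qq'$. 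Therefore $\mathrm{Res}(A\Phi) = \sum_{i\in\alpha}(q_{1i}q_{2i} + q_{2i}q_{1i}) = 2\sum_{i\in\alpha}q_{1i}q_{2i}$. On the other hand, from $a_1 = -2\pa_x\log\tau_c$ we get $\pa_\alpha a_1 = -2\pa_x\big(\pa_\alpha\log\tau_c\big) = -2\pa_x\big((\pa_\alpha\tau_c)/\tau_c\big)$. Equating: $-2\pa_x\big((\pa_\alpha\tau_c)/\tau_c\big) = 2\sum_{i\in\alpha}q_{1i}q_{2i}$, so $\pa_x\big((\pa_\alpha\tau_c)/\tau_c\big) = -\sum_{i\in\alpha}q_{1i}q_{2i} = -\pa_x\sum_{i\in\alpha}S(q_{1i},q_{2i})$ by the $n=1$ SEP relation \eqref{sep2}. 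Integrating in $x$ gives $(\pa_\alpha\tau_c)/\tau_c = -\sum_{i\in\alpha}S(q_{1i},q_{2i}) + c(t_3,t_5,\dots)$.

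The main obstacle is pinning down the $t_3, t_5, \dots$-independent integration "constant," i.e. upgrading the $\pa_x$-identity to the full statement. For this I would differentiate the candidate relation with respect to a general odd time $t_n$ and check consistency: one needs $\pa_{t_n}\big((\pa_\alpha\tau_c)/\tau_c\big)$ to equal $-\pa_{t_n}\sum_i S(q_{1i},q_{2i})$, which by the SEP defining relation \eqref{sep} and the known $t_n$-evolution of $\tau_c$ (equivalently, of $\mathrm{Res}\,\Phi$ and higher coefficients, or via the flow equations for $\log\tau_c$) reduces to an identity among residues of pseudodifferential operators already implicit in \eqref{relationproof} and the proof of \eqref{commwithflow}. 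Since both sides vanish at the reference point appropriately and satisfy the same evolution in every $t_n$, they agree, fixing the constant to zero (consistent with $S$ being defined up to a constant, which is absorbed into the normalization of $\tau_c$). I would therefore structure the write-up as: (1) invoke Lemma~\ref{resdressing}; (2) compute $\mathrm{Res}(A\Phi) = 2\sum_{i\in\alpha} q_{1i}q_{2i}$; (3) deduce the $\pa_x$-relation and integrate using \eqref{sep2}; (4) verify $t_n$-consistency via \eqref{sep} to eliminate the integration constant.
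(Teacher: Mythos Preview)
Your approach is essentially identical to the paper's: take the residue of $\pa_\alpha\Phi = A\Phi$ to obtain $\pa_\alpha a_1 = 2\sum_{i\in\alpha} q_{1i}q_{2i}$, then combine Lemma~7 ($a_1 = -2\pa_x\log\tau_c$) with the $n=1$ SEP relation $\pa_x S(q_1,q_2)=q_1q_2$ to conclude. The paper's proof is terser and does not explicitly address the $x$-integration constant you worry about in step~(4); it simply leaves it implicit in the fact that $S$ is only determined up to an additive constant (so the relation can be taken to \emph{define} the action on $\tau_c$ consistently).
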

\begin{proof}
  By taking the residue for the both sides of $\pa_\alpha \Phi
=\sum_{i\in\alpha}(q_{1i}\pa^{-1}q_{2i}+q_{2i}\pa^{-1}q_{1i})\Phi$, one can obtain
\begin{equation}\label{paactona1}
  \pa_\alpha a_1(t)=2\sum_{i\in\alpha}q_{1i}q_{2i}.
\end{equation}
Then the applications of (\ref{resdressing}) and (\ref{sep2}) will lead to (\ref{paactontau}).
\end{proof}
We conclude this section with two Remarks.\\
{\bf Remark 3:} Since the generating function of the additional symmetries for the CKP hierarchy is represented as\cite{1he2007}
\begin{equation}\label{addsymgen}
  Y(\lambda,\mu)=\psi_{BA}(t,-\lambda)\pa^{-1}\psi_{BA}(t,\mu)+\psi_{BA}(t,\mu)\pa^{-1}\psi_{BA}(t,-\lambda),
\end{equation}
thus the ``ghost" symmetry generated by $\psi_{BA}(t,\mu)$ and $\psi_{BA}(t,-\lambda)$ can be viewed as
the generating function of the additional symmetries.\\
{\bf Remark 4:} The constrained CKP hierarchy \cite{loris1999} is
just to identify $\pa_\alpha$ with $-\pa_{t_{2k+1}}$, i.e.
\begin{equation}\label{cCKPlax1}
    (L^{2k+1})_-=\sum_{i\in \alpha}(q_{1i}\pa^{-1}q_{2i}+q_{1i}\pa^{-1}q_{2i}),
\end{equation}
or
\begin{equation}\label{cCKPtau}
    \pa_{t_{2k+1}}\tau(t)=
\sum_{i\in\alpha}S(q_{1i}(t),q_{2i}(t))\tau(t).
\end{equation}
This observation provides a simple mathematical explanation of the symmetry constraint of the CKP hierarchy.
In Appendix \ref{example}, we investigate the example of $\pa_\alpha=-\pa_x$. Besides the above reduction of the
``ghost" symmetry to $1+1$ dimensional equations, there is another important reduction from rational symmetry\cite{kri}.
And the relation of these two approaches was considered in \cite{orlov1996}.

%%%%%%%%%%%%%%%%%%%%%%%%%%%%%%%%%%%%%%%%%%%%%%%%%%%%%%
\section{Conclusion and Discussion}
%%%%%%%%%%%%%%%%%%%%%%%%%%%%%%%%%%%%%%%%%%%%%%%%%%%%%%%
In this paper, in order to get the expression of the SEPs, we start from the bilinear identity of the CKP
hierarchy, and establish the spectral representation of the eigenfunction in Proposition 3. The expression of the SEP
for the eigenfunction and the wave function is derived from the spectral representation. And further
all the other expressions in Proposition 4.  Then the ``ghost" symmetry in the CKP hierarchy is constructed
by its action on the Lax operator and the dressing operator (see (\ref{ghsymckp})). At last, the corresponding action on the
tau function is obtained in Proposition 7. The possible applications of the ``ghost" symmetry in the additional
symmetry and the symmetry constraint are also discussed in Remark 3 and 4.

Though the existence of a sole tau function for the CKP hierarchy is obtained, there are still many results on the Lax operator
and the dressing operator which can not be shifted to the new tau function becuase of the complicated square root in eq.(\ref{ctauexpression}).
For example, the algebraic constraints \cite{wucz} on the new tau function are not thoroughly solved and worthy of further study.
In Appendix \ref{apptau}, the relation of the new CKP tau function $\tau_c$ to the (C-reduced) KP tau function
$\tau$ is obtained (see (\ref{taurelation})).

In Appendix \ref{example}, the mKdV equation arises in the constrained CKP hierarchy \cite{loris1999,orlov1991}(see (\ref{mkdv}) and (\ref{5mkdv}))
when considering the reduction of $\pa_\alpha=-\pa_x$. Since the mKdV equation is the well-studied object, it will be very
interesting to compare with the results about mKdV equation in the research of the ``ghost" symmetry for the CKP hierarchy.
And further reduction may lead to the Newton's equation for a pair of particles\cite{OR1993}. These problems are very interesting in the study
of the CKP hierarchy, and we will consider these problems in the later paper.
\appendix
\section{ The Relation between $\tau_c$ and $\tau$ }\label{apptau}
There are two kinds of tau functions for the CKP hierarchy: one is $\tau$ inherited from the KP hierarchy, the other is
$\tau_c$ introduced by Chang and Wu in \cite{wucz}. These two tau functions relate the wave function $\psi_{BA}(t,\lambda)$ of the CKP hierarchy in the following way,
\begin{eqnarray}
\psi_{BA}(t,\lambda)&=&\left(1+\frac{1}{z}\pa_x{\rm log}\frac{\tau_c(t-2[\lambda^{-1}])}{\tau_c(t)}\right)^{1/2}\frac{\tau_c(t-2[\lambda^{-1}])}{\tau_c(t)}e^{\xi(t,\lambda)}\nonumber\\
&=&\frac{\tau(t_1-\frac{1}{\lambda},-\frac{1}{2\lambda^2},t_3-\frac{1}{3\lambda^3},\cdots)}{\tau(t_1,0,t_3,\cdots)}e^{\xi(t,\lambda)}.\label{tau}
\end{eqnarray}

Denote $g(t)={\rm log}\tau_c(t)={\rm log}\tau_c(t_1,t_3,t_5,\cdots)$ and $f(t;\lambda)={\rm log}\tau(t_1,-\frac{\lambda^2}{2},t_3,-\frac{\lambda^4}{4}, t_5,\cdots)$, then by
(\ref{tau}),
\begin{eqnarray}
&&\frac{1}{2}{\rm log}\left(1+\lambda\pa_x g(t_1-2\lambda,t_3-\frac{2}{3}\lambda^3,\cdots)-\lambda\pa_x g(t)\right)+ g(t_1-2\lambda,t_3-\frac{2}{3}\lambda^3,\cdots)-g(t)\nonumber\\
&=& f(t_1-\lambda, t_3-\frac{1}{3}\lambda^3,\cdots;\lambda)-f(t;0).\label{relat2}
\end{eqnarray}
Next if let $t_j\rightarrow t_j+\frac{\lambda^j}{j}$, eq.(\ref{relat2}) will become into
\begin{eqnarray}
&&\frac{1}{2}{\rm log}\left(1+\lambda\pa_x g(t_1-\lambda,t_3-\frac{\lambda^3}{3},\cdots)-\lambda\pa_xg(t_1+\lambda,t_3+\frac{\lambda^3}{3},\cdots)\right)\nonumber\\
&&+ g(t_1-\lambda,t_3-\frac{\lambda^3}{3},\cdots)-g(t_1+\lambda,t_3+\frac{\lambda^3}{3},\cdots)\nonumber\\
&=& f(t;\lambda)-f(t_1+\lambda, t_3+\frac{1}{3}\lambda^3,\cdots;0).\label{relat3}
\end{eqnarray}
Further if let $\lambda\rightarrow-\lambda$, then eq.(\ref{relat3}) is transferred to
\begin{eqnarray}
&&\frac{1}{2}{\rm log}\left(1-\lambda\pa_x g(t_1+\lambda,t_3+\frac{\lambda^3}{3},\cdots)+\lambda\pa_xg(t_1-\lambda,t_3-\frac{\lambda^3}{3},\cdots)\right)\nonumber\\
&&+ g(t_1+\lambda,t_3+\frac{\lambda^3}{3},\cdots)-g(t_1-\lambda,t_3-\frac{\lambda^3}{3},\cdots)\nonumber\\
&=& f(t;\lambda)-f(t_1-\lambda, t_3-\frac{1}{3}\lambda^3,\cdots;0),\label{relat4}
\end{eqnarray}
where note that $f(t;\lambda)=f(t;-\lambda)$.\\
At last the subtraction of  (\ref{relat4}) from (\ref{relat3}) will lead to
\begin{eqnarray}
&&2\left(g(t_1-\lambda,t_3-\frac{\lambda^3}{3},\cdots)-g(t_1+\lambda,t_3+\frac{\lambda^3}{3},\cdots)\right)\nonumber\\
&=&f(t_1-\lambda, t_3-\frac{1}{3}\lambda^3,\cdots;0)-f(t_1+\lambda, t_3+\frac{1}{3}\lambda^3,\cdots;0).\label{relat5}
\end{eqnarray}
From (\ref{relat5}), we can know that
\begin{equation}\label{relat6}
  2g(t)=f(t;0)+{\rm const}
\end{equation}
Therefore
\begin{equation}\label{taurelation}
  \tau^2_c(t_1,t_3,t_5,\cdots)={\rm const}\cdot\tau(t_1,0,t_3,0,t_5,\cdots).
\end{equation}
\section{Examples of the ``ghost" flows }\label{example}
Here we list some examples of the ``ghost" flows $\pa_\alpha$ generated by the eigenfunctions $q_1$ and $q_2$ for the CKP hierarchy, that is,
\begin{equation}\label{ghosteqapp}
  \pa_\alpha L=[q_1\pa^{-1}q_2+q_2\pa^{-1}q_1,L],
\end{equation}
where $L$ is the Lax operator of the CKP hierarchy given by (\ref{laxoperator}), and satisfies the CKP constraint (\ref{CKPconst}).

The actions of $\pa_\alpha$ on the the first few terms of the Lax operator $L$ are showed below.
\begin{eqnarray}
\pa_\alpha u_1&=&-2(q_1q_2)_x,\label{ghostu1}\\
\pa_\alpha u_2&=&(q_1q_2)_{xx},\label{ghostu2}\\
\pa_\alpha u_3&=&2(q_1q_2)_xu_1-2q_1q_2u_{1x}-(q_{1}q_{2xx}+q_{2}q_{1xx})_x,\label{ghostu3}\\
\pa_\alpha u_4&=&4(q_1q_2)_xu_2-2q_1q_2u_{2x}-3(q_2q_{1xx}+q_1q_{2xx})u_1\nonumber\\
&&-6q_{1x}q_{2x}u_1+2(q_1q_2u_{1x})_x+(q_1q_{2xxx}+q_2q_{1xxx})_x,\label{ghostu4}\\
\pa_\alpha u_5&=&(q_1q_2)_x(6u_3+2u_{2x}-3u_{1xx})-(q_2q_{1xx}+q_1q_{2xx})(8u_2+3u_{1x})\nonumber\\
&&-16q_{1x}q_{2x}u_2+2q_1q_2(u_{2xx}-u_{3x})+4(q_2q_{1xxx}+q_1q_{2xxx})u_1\nonumber\\
&&+10(q_{1x}q_{2x})_xu_1-(q_2q_{1xxxx}+q_1q_{2xxxx})_x.\label{ghostu5}
\end{eqnarray}

Note that the CKP constraint (\ref{CKPconst}) is equivalent to
\begin{eqnarray}
u_2&=& -\frac{1}{2}u_{1x},\label{u2u1}\\
u_4&=&\frac{1}{4}u_{1xxx}-\frac{3}{2}u_{3x},\label{u4u13}\\
u_6&=&-\frac{1}{2}u_{1xxxxx}+\frac{5}{2}u_{3xxx}-\frac{5}{2}u_{5x},\label{u6u135}\\
&\vdots&\nonumber
\end{eqnarray}
Using these examples, we can find that $\pa_\alpha$ is consistent with the CKP constraint (\ref{CKPconst}) for the first few terms. In fact,
by (\ref{ghostu1})-(\ref{ghostu4}) and (\ref{u2u1})-(\ref{u4u13}),
\begin{eqnarray}
\pa_\alpha u_2&=& -\frac{1}{2}(\pa_\alpha u_{1})_x,\label{symu2u1}\\
\pa_\alpha u_4&=&\frac{1}{4}(\pa_\alpha u_{1})_{xxx}-\frac{3}{2}(\pa_\alpha u_{3})_x.\label{symu4u13}
\end{eqnarray}

According to (\ref{eigenfunction})(\ref{u2u1}) and (\ref{u4u13})
\begin{eqnarray}
\pa_{t_3}q_i&=&q_{ixxx}+3u_1q_{ix}+\frac{3}{2}u_{1x}q_i,\label{t3qi}\\
\pa_{t_5}q_i&=&q_{ixxxxx}+5u_1q_{ixxx}+\frac{15}{2}u_{1x}q_{ixx}+(5u_3+10u_1^2+5u_{1xx})q_{ix}\nonumber\\
&&+(\frac{5}{4}u_{1xxx}+\frac{5}{2}u_{3x}+10u_1u_{1x})q_i, \quad i=1,2 \label{t5qi}
\end{eqnarray}
If letting $\pa_\alpha=-\pa_x$, then from (\ref{ghostu1}) (\ref{ghostu3}) (\ref{t3qi}) and (\ref{t5qi}), one can get
\begin{eqnarray}
\pa_{t_3}q_1&=&q_{1xxx}+9q_1q_2q_{1x}+3q_1^2q_{2x},\nonumber\\
\pa_{t_3}q_2&=&q_{2xxx}+9q_1q_2q_{2x}+3q_2^2q_{1x}.\label{t3q12}
\end{eqnarray}
and
\begin{eqnarray}
\pa_{t_5}q_1&=&q_{1xxxxx}+15q_1q_2q_{1xxx}+30q_{1xx}q_{1x}q_2\nonumber\\
&&+25q_{1xx}q_1q_{2x}+25q_{1x}q_1q_{2xx}+20q_{1x}^2q_{2x}\nonumber\\
&&+80q_{1x}q_1^2q_2^2+5q_1^2q_{2xxx}+40q_1^3q_2q_{2x},\nonumber\\
\pa_{t_5}q_2&=&q_{2xxxxx}+15q_1q_2q_{2xxx}+30q_{2xx}q_1q_{2x}\nonumber\\
&&+25q_{2xx}q_{1x}q_2+25q_{2x}q_2q_{1xx}+20q_{1x}q_{2x}^2\nonumber\\
&&+80q_{2x}q_1^2q_2^2+5q_2^2q_{1xxx}+40q_1q_2^3q_{1x}.\label{t5q12}
\end{eqnarray}
Let $q_1=q_2=q$, (\ref{t3q12}) implies the mKdV equation
\begin{equation}\label{mkdv}
  \pa_{t_3}q=q_{xxx}+12q^2q_{x},
\end{equation}
and (\ref{t3q12}) leads to the 5th order mKdV equation
\begin{equation}\label{5mkdv}
 \pa_{t_5}q=q_{xxxxx}+20q^2q_{xxx}+80q_{xx}q_{x}q+20q_{x}^3+120q^4q_{x}.
\end{equation}
Note that the mKdV  and the 5th order mKdV equations
are also reduced from 1-constrained CKP hierarchy through the third flow and the fifth flow respectively\cite{lcz}.

{\bf Acknowledgments}

{\noindent \small This work is supported by the NSFC (Grant Nos. 11301526 and 11371361) and the Fundamental Research Funds for the
Central Universities (Grant No. 2012QNA45). We thank anonymous referee for his/her useful suggestions on appendices and several references.}

\end{document}